\numberwithin{equation}{section}
\newtheorem{Theorem}{Theorem}[section]
\newtheorem{Lemma}[Theorem]{Lemma}
\newtheorem{Proposition}[Theorem]{Proposition}
 { \theoremstyle{definition}
\newtheorem{Remark}[Theorem]{Remark} }
\tikzstyle{block}=[draw opacity=0.7,line width=1.4cm]
\tikzset{
 state/.style={
 rectangle,
 rounded corners,
 draw=black, very thick,
 minimum height=2em,
 inner sep=2pt,
 text centered,
 },
}
\newcommand{\set}[1]{\left\{#1\right\}}
\newcommand{\cI}{\mathcal{I}}
\newcommand{\cL}{\mathcal{L}}
\newcommand{\cV}{\mathcal{V}}
\newcommand{\cG}{\mathcal{G}}
\newcommand{\cK}{\mathcal{K}}
\newcommand{\cA}{\mathcal{A}}
\newcommand{\tf}{\widetilde{f}}
\newcommand{\tg}{\widetilde{g}}
\newcommand{\PB}{\left\{\cdot ,\cdot\right\}}
\newcommand{\pb}[1]{\left\{#1\right\}}
\begin{document}
\allowdisplaybreaks

\newcommand{\arXivNumber}{1804.02564}

\renewcommand{\PaperNumber}{059}

\FirstPageHeading

\ShortArticleName{Dressing the Dressing Chain}

\ArticleName{Dressing the Dressing Chain}

\Author{Charalampos A.~EVRIPIDOU~$^\dag$, Peter H.~VAN DER KAMP~$^\dag$ and Cheng ZHANG~$^\ddag$}

\AuthorNameForHeading{C.A.~Evripidou, P.H.~van der Kamp and C.~Zhang}

\Address{$^\dag$~Department of Mathematics and Statistics, La Trobe University, \\
\hphantom{$^\dag$}~Melbourne, Victoria 3086, Australia}
\EmailD{\href{mailto:c.evripidou@latrobe.edu.au}{C.Evripidou@latrobe.edu.au}, \href{mailto:P.VanDerKamp@latrobe.edu.au}{P.VanDerKamp@latrobe.edu.au}}

\Address{$^\ddag$~Department of Mathematics, Shanghai University, 99 Shangda Road, Shanghai 200444, China}
\EmailD{\href{mailto:ch.zhang.maths@gmail.com}{ch.zhang.maths@gmail.com}}

\ArticleDates{Received April 18, 2018, in final form June 04, 2018; Published online June 15, 2018}

\Abstract{The dressing chain is derived by applying Darboux transformations to the spectral problem of the Korteweg--de Vries (KdV) equation. It is also an auto-B\"acklund transformation for the modified KdV equation. We show that by applying Darboux transformations to the spectral problem of the dressing chain one obtains the lattice KdV equation as the dressing chain of the dressing chain and, that the lattice KdV equation also arises as an auto-B\"acklund transformation for a modified dressing chain. In analogy to the results obtained for the dressing chain (Veselov and Shabat proved complete integrability for odd dimensional periodic reductions), we
study the $(0,n)$-periodic reduction of the lattice KdV equation, which is a~two-valued correspondence. We provide explicit formulas for its branches and establish complete integrability for odd~$n$.}

\Keywords{discrete dressing chain; lattice KdV; Darboux transformations; Liouville integrability}

\Classification{35Q53; 37K05; 39A14}

\section{Introduction}
\looseness=1 The dressing chain \cite{Sha,ShaYa,ves_shab} appeared in the application of Darboux transformations to the Sch\"odinger (Sturm--Liouville) equation, which is the spectral problem for the Korteweg--de Vries (KdV) equation. A detailed study concerning the integrability properties as well as solutions of the model were presented in \cite{ves_shab}. In particular, the authors proved that the dressing chain with a~periodic constraint in odd dimensions is completely integrable in the sense of Liouville--Arnold.

The dressing chain can also be obtained as an auto-B\"acklund transformation of the modified KdV (mKdV) equation via the celebrated Miura transformation between KdV and mKdV~\cite{Miu}, and a symmetry of mKdV. The two ways of deriving the dressing chain are not unrelated, as the Miura transformation itself can be derived from factorisation of the Schr\"odinger equation~\cite{FG}.

Both the Darboux and B\"acklund approach can be seen as a discretisation process, and the two methods have been applied to other equations. In particular, in the discrete setting, Spiridonov and Zhedanov \cite{SZ} considered a tri-diagonal discrete Schr\"odinger equation, for which discrete Darboux transformations gave rise to two equivalent systems: the discrete time Toda lattice and a system they called the discrete dressing chain \cite[equations~(5.30) and~(5.31)]{SZ}. As the discrete Schr\"odinger equation considered in~\cite{SZ} is the spectral problem for the Toda lattice~\cite{MatSal} one could refer to these systems as dressing chains of the Toda lattice. Starting from the Volterra equation and using two discrete Miura transformations, Levi and Yamilov obtained an integrable lattice equation which they regard as a direct analogue of the dressing chain \cite[equation~(31)]{LY}, cf.~\cite{GHY}. In our context, we would refer to that equation as an auto-B\"acklund transformation for a modified Volterra equation.

In general, for a given integrable equation, one can ask the following questions, see Fig.~\ref{sch}:
\begin{enumerate}\itemsep=0pt
\item Is there a Miura or, more generally, a B\"acklund transformation to a modified equation which has a symmetry? This then gives rise to an auto-B\"acklund transformation of the modified equation, which discretises the equation.
\item Is there an associated spectral problem, whose factorisation yields a dressing chain?
\item Does the B\"acklund transformation~(1) arise in the factorisation of the spectral problem~(2)?
\item Does the auto-B\"acklund transformation for the modified equation coincide with the dressing chain of the equation?
\item Can the original equation be recovered by appropriate continuum limits?
\item Is the auto-B\"acklund transformation/dressing chain integrable?
\end{enumerate}

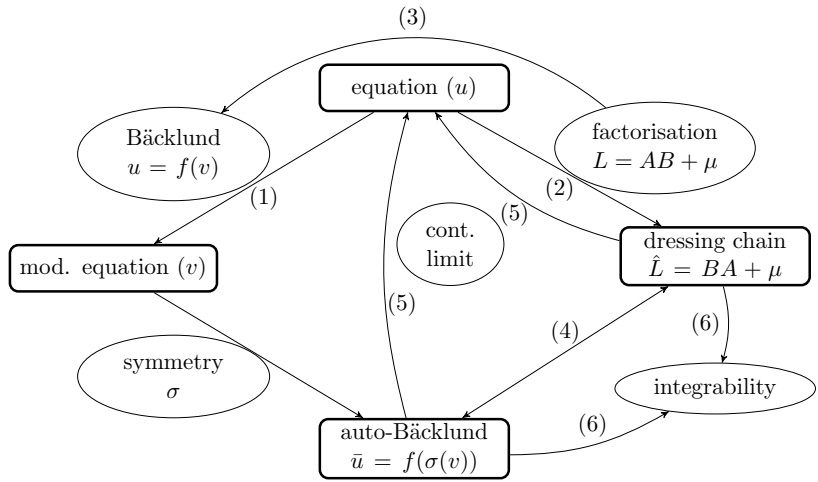
\begin{figure}[h]
\centering
\scalebox{0.8}{
\begin{tikzpicture}[->,>=stealth']

 \node[%
 state,
 text width=3.3cm] (modeq)
 {mod. equation $(v)$};

 \node[draw=black, ellipse, minimum width=3cm, align=center,
 right of = modeq, yshift=1.9 cm, text width = 2cm] (bactr)
 {B\"acklund $u=f(v)$};

 \node[draw=black, ellipse, minimum width=3cm, align=center,
 right of = modeq, yshift=-1.8 cm, text width = 2cm] (sym)
 {symmetry $\sigma$};

 \node[state, 	
 text width=3cm, 	
 yshift=3cm, 		
 right of=modeq, 	
 node distance=5cm, 	
 anchor=center] (eq) 	
 {%
 equation $(u)$
 };

 \node[draw=black, ellipse, minimum width=3cm, align=center,
 right of = eq, yshift=-1 cm, node distance=4cm, text width = 2.1cm] (fact)
 {factorisation $L=AB+\mu$};

 \node[draw=black, ellipse, minimum width=1cm, align=center,
 right of = eq, yshift=-5 cm, node distance = 5cm, text width = 2.1cm] (int)
 {integrability};

 \node[right of = eq, yshift=-3.6 cm, node distance = .55cm, text width = 2cm] (5)
 {(5)};

 \node[draw=black, ellipse, minimum width=1cm, align=center,
 right of = eq, yshift=-2.6 cm, node distance = .6cm, text width = 1cm] (cont)
 {cont. limit};

 \node[state,
 text width=3cm,
 yshift=-11cm,
 below of=eq,
 node distance=-5cm,
 anchor=center] (autob)
 {%
auto-B\"acklund
$\bar{u}=f(\sigma(v))$
 };

 \node[state,
 text width=3cm,
 yshift=-2.8 cm,
 right of=eq,
 node distance=5cm,
 anchor=center
 ] (drch)
 {%
dressing chain\\
$\hat{L}=BA+\mu$
 };

 \path

 (eq) 	edge node[anchor=south,below]{(1)} (modeq)
 (modeq) 	edge (autob)
 (eq) edge node[anchor=south,below]{(2)} (drch)
 (autob) edge[<->] node[anchor=south,above]{(4)} (drch)
 (fact) edge[bend right=40] node[anchor=south,above]{(3)} (bactr)
 (drch) edge[bend left=15,->] node[anchor=south,left]{(6)} (int)
 (autob) edge[bend right=15,->] node[anchor=south,above]{(6)} (int)
 (drch) edge[bend left=20] node[below]{(5)} (eq)
 (autob) edge[bend left=15] (eq)
 ;

\end{tikzpicture}}

\caption{Darboux and B\"acklund transformations.}\label{sch}
\end{figure}

In this paper, our starting point is the dressing chain. We factorise its discrete spectral problem, which itself is an exact discretisation, cf.~\cite{ZPZ}, of the (continuous) Schr\"odinger equation. It turns out that the {\em discrete} dressing chain (of the dressing chain) coincides with the (non-autonomous) lattice Korteweg--de Vries (lKdV) equation. By studying a related Lax representation we identify a~B\"acklund transformation to a modified dressing chain which admits a~symmetry. The derived auto-B\"acklund transformation is again given by the lKdV equation.

In analogy to the continuous case, cf.~\cite{ves_shab}, we study the (0,$n$)-periodic reduction\footnote{The ($n$,0)-periodic reduction gives rise to the same maps, up to a minus sign.} of the (discrete) dressing chain of the dressing chain (a.k.a.\ the lKdV equation), which is a two-valued correspondence (i.e., multi-valued map). We provide explicit formulas for its two branches, and establish linear growth of multi-valuedness. Moreover, we prove (in odd dimensions) that the map is Liouville integrable with respect to a quadratic Poisson structure of Lotka--Volterra type.

\section{Background} \label{sodc}
We clarify Fig.~\ref{sch} by succinctly providing some details for the KdV equation. We hope it also makes clear to the reader that how the dressing chain is related to the KdV equation is completely analogous to how the lattice KdV equation is related to the dressing chain.

The KdV equation $u_t=u_{xxx}-6uu_x$ arises as the compatibility condition, $L_t=[L,M]$, for the system of linear equations $L\phi =\lambda \phi$, $\phi_t=M\phi$ where $L$ is the Schr\"odinger operator $L=-D^2+u$, $M=4D^3-3(uD+Du)$, and $\lambda$ is a spectral parameter. One can check that if $v$ satisfies the mKdV equation $v_t=v_{xxx}-6(v^2+\alpha)v_x$, then $u$ given by the Miura transformation
\begin{gather}\label{Miu}
u=v_x+v^2+\alpha
\end{gather}
satisfies the KdV equation. As the mKdV equation is invariant under $v\rightarrow-v$ another Miura transformation is given by
\begin{gather}\label{Miub}
\bar{u}=-v_x+v^2+\alpha.
\end{gather}
Combining the two equations (\ref{Miu}) and (\ref{Miub}) yields an auto-B\"acklund transformation for the mKdV equation,
\begin{gather*}
(\bar{v}+v)_x=v^2-\bar{v}^2+\alpha-\bar{\alpha},
\end{gather*}
which coincides with the dressing chain \cite{ves_shab}. A related chain, which is an auto-B\"acklund transformation for the potential KdV equation, was already written down by Wahlquist and Estabrook~\cite{WE}, who used Bianchi's permutability theorem to show that it generates hierarchies of solutions due to a nonlinear superposition principle. More general auto-B\"acklund transformations (and their interpretation as differential-difference equations) were given in \cite{Levi2, Levi}. Auto-B\"acklund transformations for differential-difference equations are lattice equations, and some examples were presented in \cite{GY,LY}.

Darboux transformations for differential and difference equations (also known as dressing transformations) are maps of the functions and the coefficients that preserve the form of the equations \cite{Darb1}. They can be obtained by factorisation of operators, cf.~\cite{BC, FG, IH, Schr1, Schr2}, and they provide an effective way to construct exact solutions of a wide range of integrable equations (see, e.g., the monograph~\cite{Darb2}).

Recall that the Schr\"odinger operator $L$ can be decomposed as
\begin{gather*}
 L=-(D+v)(D-v)+\alpha ,
\end{gather*}
subject to the constraint (\ref{Miu}). Darboux \cite{Darb1} showed that under the transformation
\begin{gather}
\label{tip}
\phi\mapsto \widetilde{\phi}=(D-v) \phi ,
\end{gather}%
and $L \mapsto \widetilde{L}$, where (interchanging the two factors in the decomposition)
\begin{gather*}
 \widetilde{L} =-(D-v)(D+v)+\alpha
\end{gather*}
the form of the Sch\"odinger equation is unchanged: $\widetilde{L} \widetilde{\phi}=\lambda \widetilde{\phi}$. The $~\widetilde{}~$ operation characterises a~Darboux transformation for the Schr\"odinger equation $L$, if $\widetilde{L}$ is still a Schr\"odinger operator, i.e., $\widetilde{L} =-D^2+\widetilde{u}$ with $\widetilde{u} = -v_x+v^2+\alpha$, cf.\ equation~\eqref{Miub}. Iterated Darboux transformations result in the dependency of the functions $u$ and~$\phi$ on shifts in the $~\widetilde{}~$ direction, and~$\alpha$ becomes a~lattice parameter. Eliminating $u,\widetilde{u}$ in the above decompositions yields the dressing chain
\begin{gather}\label{dc}
(\widetilde{v}+v)_x=v^2-\widetilde{v}^2+\alpha-\widetilde{\alpha}.
\end{gather}%
Denoting $v=v_i$, $\widetilde{v}=v_{i+1}$, etc., and adding a periodic constraint, i.e., $v_{i+n}=v_{i}$ and $\alpha_{i+n}=\alpha_{i}$, one gets the finite dimensional systems of ordinary differential equations
\begin{gather}\label{pdc}
(v_{i+1}+v_i)_x=v_i^2-v_{i+1}^2+\alpha_i-\alpha_{i+1} , \qquad 1\leq i\leq n,
\end{gather}
which was shown to be completely integrable for odd $n$ \cite{ves_shab}.

\section{Dressing the dressing chain}
By eliminating the $x$-derivatives in the Schr\"odinger equation $L\phi=\lambda\phi$, using equation (\ref{tip}), one obtains the discrete Schr\"odinger equation
\begin{gather}\label{dSe}
K\phi=\lambda\phi,\qquad K=-T^2+h T+\alpha
\end{gather}
where
\begin{gather} \label{hv}
h=-v-\widetilde{v},
\end{gather}
and $T\colon z\rightarrow\widetilde{z}$ represents a shift operator. The discrete Schr\"odinger operator $K$ is the dual, with respect to~(\ref{tip}), to the continuous operator $L$, cf.~\cite{shabat1}. We note that the compatibility condition $K_x=[N,K]$, with $N=T+v$, provides a Lax representation for the dressing chain~(\ref{dc}).

The operator $K$ can be decomposed as, cf.~\cite{ZPZ},
\begin{gather*}
K = -(T+f) (T-g) - \beta .
\end{gather*}
Here $\beta$ does not depend on the $~\widetilde{}~$ direction (as $\alpha$ does) but will depend on another discrete direction introduced below. In order that such a~decomposition holds, one needs
\begin{gather} \label{h}
h=\widetilde{ g}-f
\end{gather}
and
\begin{gather} \label{fg}
f g=\alpha + \beta.
\end{gather}
Eliminating $f$ leads to $(h-\widetilde{ g}) g+\alpha+\beta=0$. This can be solved by posing $g = \widetilde{\psi}\psi^{-1}$, where~$\psi$ is a special solution of~\eqref{dSe} with $\lambda = -\beta$. Now that $f$ and $g$ are well defined, we can apply the usual tactics (interchanging the two factors in the decomposition) to generate a Darboux transformation for~\eqref{dSe}. With
\begin{gather}\label{hap}
\widehat{\phi}=G \phi,\qquad G=T-g ,
\end{gather}
and
\begin{gather*}
\widehat{K}=-(T-g)(T+f)-\beta
\end{gather*}
we have $\widehat{K} \widehat{\phi}=\lambda \widehat{\phi}$. Letting $\widehat{K} =- T^2+\widehat{h} T+\alpha$ imposes another constraint
\begin{gather} \label{hh}
\widehat{h}= g-\widetilde{f},
\end{gather}
which together with \eqref{h} and \eqref{fg} yields the non-autonomous lattice KdV equation (lKdV)
\begin{gather}\label{dkdv}
\tf-\widehat{f}=\frac{\alpha+\beta}{f}-\frac{\widetilde{\alpha}+\widehat{\beta}}{\widehat{\tf}} ,
\end{gather}
or, in terms of $g$,
\begin{gather}
\label{dkdvg}
g-\widehat{\widetilde{g}}=\frac{\widetilde{\alpha}+\beta}{\widetilde{g}}-\frac{\alpha+\widehat{\beta}}{\widehat{g}} .
\end{gather}
Shifts in the $~\widehat{}~$ direction correspond to a second discrete direction, created by iterated Darboux transformations, and the parameter $\beta$ varies in this direction. The linear system of equations~(\ref{dSe}) and~(\ref{hap}) provides a Lax representation for the lKdV equation, $\widehat{K}G=\widetilde{G}K$.

In the light of Fig.~\ref{sch}, equation (\ref{dkdv}), or (\ref{dkdvg}), is the dressing chain of the dressing chain. In analogy to the continuous case, we will consider a periodic reduction in the $~\widehat{}~$ direction, i.e., $f_{i+n}=f_i$ and $\beta_{i+n}=\beta_i$, and we take $\widetilde{\alpha}=\alpha$ to be a constant. The finite dimensional system of difference equations we will study is
\begin{gather}\label{dpdc}
\widetilde{f_i}-f_{i+1}=\frac{\alpha+\beta_i}{f_i}-\frac{\alpha+\beta_{i+1}}{\tf_{i+1}} ,\qquad 1\leq i\leq n.
\end{gather}%
As we make explicitly in Section~\ref{sc}, it gives rise to a two-valued correspondence.

It would also be justified to refer to the above system (\ref{dkdv}) as {\em the discrete dressing chain}, since its continuum limit coincides with \eqref{dc}. Using equations \eqref{h} and \eqref{hh}, one can express $f=\widehat{\widetilde{w}}-w$, $g=\widehat{w}-\widetilde{w}$. Substituting them into \eqref{fg} gives the lattice potential KdV equation $(\widehat{\widetilde{w}}-w)(\widehat{w}-\widetilde{w})=\alpha+\beta$, whose continuum limit with respect to the $~\widehat{ }~$ direction is \cite{HNJ1,Nij1}
\begin{gather}\label{pDC}
(\widetilde{w}+w)_x = (\widetilde{w}-w)^2 + \alpha .
\end{gather}
From \eqref{hv}, \eqref{h} and the above expressions for $f$ and $g$, one obtains $v=w-\widetilde{w}$ which relates the {\em potential dressing chain} \eqref{pDC} to the dressing chain \eqref{dc}. We will refer to the $(0,n)$-reduction of the lKdV equation (\ref{dpdc}) as the $n$-dimensional discrete dressing chain.

\section{The modified dressing chain}
One next wonders if the discrete dressing chain \eqref{dkdv} (or \eqref{dkdvg}) is the auto-B\"acklund transformation of a modified dressing chain. This is indeed the case. The Lax equation $G_x=\widehat{N}G-GN$ gives rise to the system $g(\widehat{v}-v)-g_x=\widetilde{v}-\widehat{v}-g+\widetilde{g}=0$, which together with $(h=)-v-\widetilde{v}=\widetilde{g}-f$ yields
\begin{gather}\label{fbt}
v=\frac12 \left(f-g-\frac{g_x}{g}\right),\qquad \widetilde{v}=\frac12 \left(f+g+\frac{g_x}{g}\right)-\widetilde{g}
\end{gather}
(as well as $\widehat{v}=\frac12 \big(f-g+\frac{g_x}{g}\big))$. The system (\ref{fbt}) provides a B\"acklund transformation, cf.~\cite[Definition~2.1.1]{HNJ1} between the dressing chain~\eqref{dc} and the following equation
\begin{gather}\label{gtx}
\tg+\frac{\widetilde{\alpha}+\beta}{\tg}-\frac{\tg_x}{\tg}=g+\frac{\alpha+\beta}{g}+\frac{g_x}{g} ,
\end{gather}
which we will refer to as {\em the modified dressing chain}. The modified dressing chain~(\ref{gtx}) admits the symmetry
\begin{gather*}
\sigma(g,\widetilde{g},x)=(\widetilde{g},g,-x).
\end{gather*}
Applying this symmetry to the right hand sides of (\ref{fbt}) and transforming the left hand sides by $(v,\widetilde{v}) \rightarrow (\widetilde{\bar{v}},\bar{v})$, we obtain another B\"acklund transformation
\begin{gather*} 
\widetilde{\bar{v}}=\frac12 \left(\widetilde{f}-\widetilde{g}+\frac{\widetilde{g}_x}{\widetilde{g}}\right),\qquad
\bar{v}=\frac12 \left(\widetilde{f}+\widetilde{g}-\frac{\widetilde{g}_x}{\widetilde{g}}\right)-g.
\end{gather*}
Combining the two B\"acklund transformations $\big(\bar{v}+\widetilde{\bar{v}}=\overline{v+\widetilde{v}}\big)$ we obtain $\widetilde{f}-g=\bar{f}-\bar{\widetilde{g}}$, which shows that the lKdV equation is an auto-B\"acklund transformation for the modified dressing chain~(\ref{gtx}).

\section[Explicit formulas for the $n$-dimensional discrete dressing chain, and linear growth of multivaluedness]{Explicit formulas for the $\boldsymbol{n}$-dimensional discrete dressing\\ chain, and linear growth of multivaluedness}\label{sc}

In this section we consider the $(0,n)$-reduction of the lattice KdV equation, which is a~two-valued correspondence. We give explicit formulas for both branches ($M,N$), and prove that $MNM=N$. The latter implies that the $l$-th iteration of the correspondence is $2l$-valued, cf.\ \cite[Section~6.2]{KQ}.

In the finite reduction~\eqref{dpdc}, without loss of generality, we set $\alpha = 0$ since it can be absorbed into the parameters $\beta_j$. Having fixed $n\in\mathbb N$ and taking $i\in\cI=\set{1,2,\ldots,n}$ subject to the periodic boundary conditions $f_{n+i}=f_i$, $\beta_{n+i}=\beta_i$ for all $i\in\mathbb \cI$, the system of equations~\eqref{dpdc} reads
\begin{align}\label{ddc}
 \begin{split}
E_1\colon \quad & \tf_1 +\frac{\beta_2}{\tf_{2}} = f_{2} +\frac{\beta_1}{f_{1}} , \\
E_2\colon \quad & \tf_2 +\frac{\beta_3}{\tf_{3}} = f_{3} +\frac{\beta_2}{f_{2}} , \\
 & \qquad \vdots \quad \qquad \\
E_n\colon \quad & \tf_n +\frac{\beta_1}{\tf_{1}} = f_{1} +\frac{\beta_n}{f_{n}} .
\end{split}
\end{align}

These equations define a two-valued correspondence on $\mathbb R^n$. One solution of the system \eqref{ddc} is given by
\begin{gather} \label{os}
\tf_i=\frac{\beta_i}{f_i} ,\qquad i\in \cI
\end{gather}
(which is $\tf_i=g_i$, cf.~(\ref{fg})). This defines a map
\begin{gather*}
N\colon \ (f_1,f_2,\ldots,f_n)\mapsto \left(\frac{\beta_1}{f_1},\frac{\beta_2}{f_2},\ldots,\frac{\beta_n}{f_n}\right),
\end{gather*}
which is an involution. The other solution of the system \eqref{ddc} gives rise to a more intriguing map on $\mathbb R^n$, which will be denoted by $M$. We next provide explicit formulas for $M$ and for its inverse.

\begin{Remark} Consider the finite version of system \eqref{dpdc} defined by choosing $n\in\mathbb N$ and restricting $i\in\cI$ subject to the open boundary condition $f_{n+1}=\beta_{n+1}=0$. The resulting system then takes the form
\begin{gather*}
 \tf_i +\frac{\beta_{i+1}}{\tf_{i+1}} =f_{i+1} +\frac{\beta_i}{f_{i}} , \qquad \text{for } i=1,2,\ldots, n-1, \qquad \text{and} \qquad \tf_n =\frac{\beta_n}{f_{n}},
\end{gather*}
whose unique solution is given by the involution \eqref{os}.
\end{Remark}

In order to describe the nontrivial solution of \eqref{ddc} we introduce some notation. With \smash{$1<n\in\mathbb N$} and $k\in\cI$ we consider $\mathbb R^k$ with coordinates $f_1, f_2, \ldots, f_k$. We fix the parameters $\beta_1, \beta_2, \ldots, \beta_n$, and define functions $F_k\colon \mathbb R^{k}\rightarrow \mathbb R$ by $F_1(f_1)=1$ and
\begin{gather} \label{functionF}
F_k=F_k(f_1, f_2, \ldots, f_k)=f_1f_2^2f_3^2\cdots f_{k-1}^2f_k ,\qquad k>1.
\end{gather}
We also define a function $G:\mathbb R^{2n} \rightarrow \mathbb R$ by
\begin{align}\label{functionG}
\begin{split}
G&=G(f_1, f_2, \ldots, f_n,\beta_1,\beta_2,\ldots,\beta_n)\\
&=\sum_{i=0}^{n-1}
\left( \prod_{j=1}^i\beta_j \cdot F_{n-i}(f_{i+1}, f_{i+2}, \ldots, f_n)\right)\\
&=F_n(f_1, f_2, \ldots, f_n)+\beta_1F_{n-1}(f_2, f_3, \ldots, f_n)+\dots+\beta_1\beta_2\cdots\beta_{n-1}.
\end{split}
\end{align}
For $n=3$ the function $G$ reads
\begin{gather*}
G=f_1f_2^2f_3+\beta_1f_2f_3+\beta_1\beta_2.
\end{gather*}
We will make use of the following cyclic permutation $\tau\colon \cI\rightarrow\cI$
\begin{gather*}
 \tau_i= \begin{cases}
 i+1, &\text{if } i< n, \\
 1 , &\text{if } i= n,
 \end{cases}
\end{gather*}
and of the involution $\sigma\colon \cI\rightarrow\cI$ defined by
\begin{gather*}
\sigma_i=n+1-i.
\end{gather*}
Simply stated, the permutation $\tau$ is a shift modulo $n$, and $\sigma$ is to reverse the elements of $\cI$. By some abuse of notation we will write, for any function $H$ depending on the variables $f_1, $ $f_2, \ldots, f_n$ and the parameters $\beta_1,\beta_2,\ldots,\beta_n$,
\begin{align*}
\tau H(f_1, f_2, \ldots, f_n,\beta_1,\beta_2,\ldots,\beta_n)&=H(f_{\tau_1}, f_{\tau_2}, \ldots, f_{\tau_n},\beta_{\tau_1},\beta_{\tau_2},\ldots,\beta_{\tau_n})\\
&=H(f_2, f_3, \ldots, f_1,\beta_2,\beta_3,\ldots,\beta_1) ,
\end{align*}
and similarly
\begin{align*}
\sigma H(f_1, f_2, \ldots, f_n,\beta_1,\beta_2,\ldots,\beta_n)&= H(f_{\sigma_1}, f_{\sigma_2}, \ldots, f_{\sigma_n},\beta_{\sigma_1},\beta_{\sigma_2},\ldots,\beta_{\sigma_n})\\
&=H(f_n, f_{n-1}, \ldots, f_1,\beta_n,\beta_{n-1},\ldots,\beta_1) .
\end{align*}
For example, for $n=3$ we have
\begin{gather*}
\tau G=f_1f_2f_3^2+\beta_2f_1f_3+\beta_2\beta_3 ,\qquad \sigma G=f_1f_2^2f_3+\beta_3f_1f_2+\beta_2\beta_3 .
\end{gather*}
A useful property of the above-defined functions is that, for any $n$, the expression
\begin{gather} \label{fixedpoint}
f_1f_nG-\beta_1\tau G=f_1^2f_2^2\cdots f_n^2-\beta_1\beta_2\cdots\beta_n
\end{gather}
is invariant under $\tau$ and $\sigma$. The following formula, which can be easily proved, is also useful. For any function $H$ depending on $f_1, f_2, \ldots, f_n$, $\beta_1,\beta_2,\ldots,\beta_n$, we have $\tau\sigma\tau H=\sigma H$, which implies, for all $i$,
\begin{gather}\label{prod_tau_sigma}
\sigma\tau^{i} H=\tau^{n-i}\sigma H .
\end{gather}

We now define functions
\begin{gather}\label{nontrivial_map}
M_i=f_{i-1}\frac{\tau^{i-1}G}{\tau^iG},
\end{gather}
where the indices are considered modulo $n$ and in the set $\cI$, in particular
\begin{gather*}
M_1=f_n\dfrac{G}{\tau G}.
\end{gather*}

\begin{Lemma}The map $M\colon (f_1,f_2,\ldots,f_n)\mapsto(M_1,M_2,\ldots,M_n)$, where $M_i$ is defined by \eqref{nontrivial_map}, is a solution of the system \eqref{ddc}.
\end{Lemma}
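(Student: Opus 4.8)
The plan is to substitute $\tf_i=M_i$ directly into each equation $E_i$ of \eqref{ddc} and reduce the resulting rational identity to a single polynomial identity that can be settled using the invariance property \eqref{fixedpoint}. Writing out the two components that appear in $E_i$, we have $M_i=f_{i-1}\,\tau^{i-1}G/\tau^iG$ and $M_{i+1}=f_i\,\tau^iG/\tau^{i+1}G$, so that $\beta_{i+1}/M_{i+1}=\beta_{i+1}\tau^{i+1}G/(f_i\tau^iG)$. Clearing the common denominator $f_i\tau^iG$ on both sides of $E_i$, I expect it to become equivalent to
\[
f_if_{i-1}\,\tau^{i-1}G+\beta_{i+1}\,\tau^{i+1}G=(f_if_{i+1}+\beta_i)\,\tau^iG,
\]
to be verified for every $i\in\cI$ with all indices read modulo $n$.

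The key observation is that \eqref{fixedpoint} supplies exactly the two auxiliary relations needed. Since the right-hand side of \eqref{fixedpoint}, namely $f_1^2f_2^2\cdots f_n^2-\beta_1\beta_2\cdots\beta_n$, is fully symmetric and hence fixed by $\tau$, applying $\tau^j$ to the identity $f_1f_nG-\beta_1\tau G=f_1^2\cdots f_n^2-\beta_1\cdots\beta_n$ and using the periodicity $f_{n+j}=f_j$ yields
\[
f_{j+1}f_j\,\tau^jG-\beta_{j+1}\,\tau^{j+1}G=f_1^2f_2^2\cdots f_n^2-\beta_1\beta_2\cdots\beta_n
\]
for all $j$. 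Specialising to $j=i$ and to $j=i-1$ gives two equations with a common right-hand side; subtracting them (or, equivalently, using one to eliminate $\beta_{i+1}\tau^{i+1}G$ and the other to eliminate $f_if_{i-1}\tau^{i-1}G$) collapses precisely to the displayed polynomial identity, which closes the verification.

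I expect the only delicate point to be the index bookkeeping, namely threading the periodic boundary conditions $f_{n+i}=f_i$, $\beta_{n+i}=\beta_i$ correctly through the shifts $\tau^{i-1},\tau^i,\tau^{i+1}$ so that the wrap-around equation $E_n$, together with the special case $M_1=f_n\,G/\tau G$ arising from $f_0=f_n$, is covered by the same computation as a generic $E_i$. Because the invariance in \eqref{fixedpoint} holds simultaneously for every power of $\tau$, no separate boundary analysis should be required and the argument stays uniform in $i$.
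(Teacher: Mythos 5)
Your proposal is correct and follows essentially the same route as the paper: clear denominators in $E_i$ to reach the polynomial identity $f_if_{i-1}\tau^{i-1}G+\beta_{i+1}\tau^{i+1}G=(f_if_{i+1}+\beta_i)\tau^iG$, and settle it by the $\tau$-invariance of the expression in \eqref{fixedpoint}. The only (cosmetic) difference is that the paper first uses $M_i=\tau M_{i-1}$ and $E_{i+1}=\tau E_i$ to reduce everything to the single equation $E_1$, whereas you carry the general index $i$ through and subtract the $j=i$ and $j=i-1$ instances of the shifted identity, which amounts to the same computation.
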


\begin{proof}By definition \eqref{nontrivial_map}, we have $M_i=\tau M_{i-1}$ for all $i$ and a similar property holds for the equations in \eqref{ddc} (each equation is obtained by applying $\tau$ to the previous one). Hence, it is enough to show that the functions $M_i$ satisfy the equation $E_1$. Taking $\tf_1=M_1$ and $\tf_2=M_2$, we have to verify
\begin{align*}
M_1 +\frac{\beta_2}{M_2} = f_{2} +\frac{\beta_1}{f_{1}} \iff& f_n\frac{G}{\tau G} +\frac{\beta_2}{f_1}\frac{\tau^2 G}{\tau G} = f_2+\frac{\beta_1}{f_1} \\
\iff& f_1f_nG+\beta_2\tau^2 G = (f_1f_2+\beta_1)\tau G \\
\iff& f_1f_nG-\beta_1\tau G= \tau(f_1f_n G-\beta_1\tau G)
\end{align*}
or equivalently that $f_1f_nG-\beta_1\tau G$ is fixed under $\tau$, which holds due to \eqref{fixedpoint}.
\end{proof}

The maps $M$ and $N$ satisfy the following relation.
\begin{Lemma} \label{lem}The map $N$ is a reversing symmetry of $M$.
\end{Lemma}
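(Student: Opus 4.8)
The plan is to reduce the statement to the algebraic identity $NMN=M^{-1}$ and then verify it by a direct computation that exploits the fact that the simple involution $N$ of \eqref{os} transforms the function $G$ of \eqref{functionG} in a controlled way. Recall that ``$N$ is a reversing symmetry of $M$'' means $NMN^{-1}=M^{-1}$; since $N$ is an involution this reads $NMN=M^{-1}$, and (left-multiplying by $N$, then right-multiplying by $M$) it is equivalent to the symmetric form $MNM=N$ quoted in Section~\ref{sc}. I would establish $NMN=M^{-1}$, since $N$ is precisely the branch under which $G$ behaves nicely, and then note that $MNM=N$ follows.

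First I would record how $N$ acts on $G$. Substituting $f_i\mapsto\beta_i/f_i$ into \eqref{functionG} and using that each $F_k$ of \eqref{functionF} is invariant under reversal of its arguments (it equals $f_1f_k\prod_{2\le j\le k-1}f_j^2$), a reindexing of the sum yields the identity $G\circ N=\frac{\beta_1\beta_2\cdots\beta_{n-1}}{F_n}\,\sigma G$ (for $n=3$ one checks directly that $\frac{\beta_1\beta_2}{f_1f_2^2f_3}\,\sigma G$ reproduces $NG$ against the displayed $\sigma G$). I would also note that, viewed as operations on functions of the $f$'s and $\beta$'s, $N$ and $\tau$ commute, so that $(\tau^kG)\circ N=\tau^k(G\circ N)$ for every $k$.

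With these two facts the composite $NMN$ telescopes. Writing $c=\frac{\beta_1\cdots\beta_{n-1}}{F_n}$ and using $F_n=\big(\prod_j f_j^2\big)/(f_1f_n)$, hence $\tau^kF_n=\big(\prod_j f_j^2\big)/(f_kf_{k+1})$ and $\tau^k(\beta_1\cdots\beta_{n-1})=\prod_{j\neq k}\beta_j$, the prefactor ratio collapses to $\tau^{i-1}c/\tau^i c=\frac{\beta_i}{\beta_{i-1}}\cdot\frac{f_{i-1}}{f_{i+1}}$. Applying $M$ to $N(f)$ via \eqref{nontrivial_map}, the inner factor $N(f)_{i-1}=\beta_{i-1}/f_{i-1}$ multiplies this ratio and leaves $M_i(N(f))=\frac{\beta_i}{f_{i+1}}\cdot\frac{\tau^{i-1}\sigma G}{\tau^i\sigma G}$. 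A final application of $N$ inverts the factor $\beta_i/f_{i+1}$ and gives $(NMN)_i=f_{i+1}\,\frac{\tau^i\sigma G}{\tau^{i-1}\sigma G}$, which is exactly the explicit formula for $M^{-1}$ obtained above (using \eqref{prod_tau_sigma} to reconcile the two forms of the inverse if needed). Thus $NMN=M^{-1}$, equivalently $MNM=N$.

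The hard part will be the bookkeeping in the middle step: one must track how the monomial prefactor $\frac{\beta_1\cdots\beta_{n-1}}{F_n}$ is shifted by $\tau^{i-1}$ in the numerator and by $\tau^i$ in the denominator, and check that it cancels precisely against the outer factors $\beta_{i-1}/f_{i-1}$ (from the inner $N$) and $\beta_i/(\,\cdot\,)$ (from the outer $N$), leaving the clean ratio of shifts of $\sigma G$. Everything hinges on the transformation $G\circ N=\frac{\beta_1\cdots\beta_{n-1}}{F_n}\sigma G$ and on the commutation of $N$ with $\tau$; the invariance \eqref{fixedpoint} under $\tau$ and $\sigma$ underlies the consistency of these cancellations, and \eqref{prod_tau_sigma} converts the $\tau$-shifts acting on $\sigma G$ into the form matching $M^{-1}$.
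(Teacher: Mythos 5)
Your proposal is correct and takes essentially the same route as the paper: both arguments rest on the explicit inverse $M^{-1}=\sigma M\sigma$, i.e., $M^{-1}_j=f_{j+1}\tau^{j}\sigma G/\tau^{j-1}\sigma G$, together with the computation of $G\circ N$, which the paper expands term by term to verify $(MN)_1=\big(NM^{-1}\big)_1$ and which you package as the single identity $G\circ N=\frac{\beta_1\cdots\beta_{n-1}}{F_n}\,\sigma G$ combined with the commutation of $N$ and $\tau$. Your componentwise telescoping is a slightly cleaner organization of the same calculation; just note that the formula for $M^{-1}$ you invoke as ``obtained above'' is itself part of the paper's proof (derived from the $\sigma$-symmetry of the system \eqref{ddc}), so it should be established before you compare $NMN$ against it.
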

\begin{proof}The statement entails
\begin{gather*}
MN=NM^{-1}.
\end{gather*}
Applying the involution $\sigma$ to all indices in system \eqref{ddc}, and interchanging $f_i\leftrightarrow \tf_i$ one observes that
\begin{gather*}
E_j\big(f_i,\tf_i\big)=E_{n-j}\big(\tf_{\sigma_i},f_{\sigma_i}\big),
\end{gather*}
for $1\leq j< n$ and $E_n\big(f_i,\tf_i\big)=E_n\big(\tf_{\sigma_i},f_{\sigma_i}\big)$. If we write the nontrivial solution as $\tf_j=M_j(f_i)$, then we also have $f_{\sigma_j}=M_j \big(\tf_{\sigma_i}\big)$, and applying $\sigma$ to the $j$ index (which just enumerates the functions), one has
\begin{gather*}
f_j=M_{\sigma_j}\big(\tf_{\sigma_i}\big)=M_j^{-1}\big(\tf_i\big).
\end{gather*}
Thus the inverse map is
\begin{gather*} 
M^{-1}=\sigma M \sigma,
\end{gather*}
which as a function of the $f_i$ has components
\begin{gather*}
M^{-1}_j = \sigma \left( f_{\sigma_j-1} \frac{ \tau^{\sigma_j-1}G}{\tau^{\sigma_j}G}\right) = f_{j+1} \frac{\sigma \tau^{n-j}G}{\sigma\tau^{n-j+1}G} = f_{j+1} \frac{\tau^{j}\sigma G}{\tau^{j-1}\sigma G}.
\end{gather*}
The latter formula is obtained using~\eqref{prod_tau_sigma}. It follows immediately that $M^{-1}_{j+1}=\tau M^{-1}_j$. Therefore it is enough to show that $(MN)_1=\big(NM^{-1}\big)_1$, that is
\begin{gather}\label{to_be_proved}
\frac{\beta_n G\big(\frac{\beta_1}{f_1},\ldots,\frac{\beta_n}{f_n}\big)}{f_n \tau G\big(\frac{\beta_1}{f_1},\ldots,\frac{\beta_n}{f_n}\big)}
= \frac{\beta_1\sigma G(f_1,\ldots,f_n)}{f_2 \tau\sigma G(f_1,\ldots,f_n)}.
\end{gather}
Using the formulas \eqref{functionF} and \eqref{functionG}, one has
\begin{align*}
G\left(\frac{\beta_1}{f_1},\ldots,\frac{\beta_n}{f_n}\right)&=
\sum_{i=0}^{n-1} \left(\prod_{j=1}^i\beta_j\!\right)F_{n-i}\left(\frac{\beta_{i+1}}{f_{i+1}},\ldots,\frac{\beta_n}{f_n}\right) =\sum_{i=0}^{n-1}%
\left(\prod_{j=1}^i\beta_j\!\right)\frac{F_{n-i}(\beta_{i+1},\ldots,\beta_n)}{F_{n-i}(f_{i+1},\ldots,f_n)}\\
&=\frac{\prod\limits_{j=1}^{n-1}\beta_j}{F_n(f_1,\ldots,f_n)}\sum_{i=0}^{n-1}\left(\prod_{j=i+2}^{n}\beta_j\right)F_{i+1}(f_1,\ldots,f_{i+1}) ,
\end{align*}
and similarly
\begin{gather*}
\tau G\left(\frac{\beta_1}{f_1},\ldots,\frac{\beta_n}{f_n}\right) = \frac{\prod\limits_{j=1}^{n-1}\beta_{j+1}}{F_n(f_2,\ldots,f_n,f_1)}
\sum_{i=0}^{n-1}\left(\prod_{j=i+2}^{n}\beta_{j+1}\right)F_{i+1}(f_2,\ldots,f_{i+2}),\\
\tau\sigma G(f_1,\ldots,f_n) =\sum_{i=0}^{n-1} \left(\prod_{j=1}^i\beta_{n+2-j}\right)F_{n-i}(f_{i+2},f_{i+1},\ldots,f_2) ,
\end{gather*}
and
\begin{gather*}
\sigma G(f_1,\ldots,f_n) =\tau\sigma\tau G(f_1,\ldots,f_n) =\sum_{i=0}^{n-1}\left(\prod_{j=1}^i\beta_{n+1-j}\right)F_{n-i}(f_{i+1},f_{i},\ldots,f_1) .
\end{gather*}
Combining all these leads to \eqref{to_be_proved}.
\end{proof}

As a corollary of Lemma \ref{lem}, using the fact that $N$ is an involution, it follows that $MN$ and $NM$ are involutions, and hence $M$ can be written as a composition of two involutions, $M=(MN)N=N(NM)$. Furthermore, Lemma~\ref{lem} implies the relations: $MNM=NNN$ and $MNN=NNM$. These relations are also satisfied by the branches of the quotient-difference $(n, 0)$-correspondence. In \cite[Section~6.2]{KQ} it is proved that the $l$-th iteration of such a correspondence is $2l$-valued.

\section[Complete integrability of the odd-dimensional discrete dressing chain]{Complete integrability of the odd-dimensional discrete\\ dressing chain} \label{fs}
In this section we show that the correspondence $(M,N)$ is Liouville integrable with respect to a quadratic Poisson structure which is of Lotka--Volterra type. Our main result is the following theorem.
\begin{Theorem}\label{thr:ddc_int}For odd $n$, the correspondence defined by \eqref{ddc} is Liouville integrable.
\end{Theorem}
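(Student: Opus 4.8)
The plan is to establish Liouville integrability for the two-valued correspondence by exploiting the reversing-symmetry structure already in place. Since $M = N(NM)$ and $NM$ is an involution, $M$ is a composition of two involutions, so it suffices to produce a Poisson structure together with $\lfloor n/2\rfloor$ functionally independent integrals in involution that are invariant under both branches. My first step is to write down a candidate quadratic Poisson bracket of Lotka--Volterra type on $\mathbb R^n$, i.e.\ $\{f_i,f_j\} = c_{ij} f_i f_j$ for a suitable antisymmetric constant matrix $(c_{ij})$ reflecting the cyclic (and reflection) structure of $\cI$ via $\tau$ and $\sigma$. The natural guess is a banded circulant pattern $c_{i,i+1}=1$, $c_{i,i-1}=-1$ with all other entries zero (nearest-neighbour Lotka--Volterra), which for odd $n$ has a $\big(n-\tfrac{n-1}{2}\big)$-dimensional kernel giving the right count of Casimirs.

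Next I would verify that this bracket is preserved by the involution $N\colon f_i\mapsto \beta_i/f_i$: since $N$ inverts each coordinate, $\{N^*f_i,N^*f_j\} = \{\beta_i/f_i,\beta_j/f_j\} = c_{ij}(\beta_i/f_i)(\beta_j/f_j)$, and the pullback condition $N^*\{f_i,f_j\}=\{N^*f_i,N^*f_j\}$ holds because the quadratic homogeneity of the bracket is unchanged by coordinate inversion up to a sign that I must check is absorbed correctly; the upshot is that $N$ is anti-Poisson (or Poisson), which combined with the reversing symmetry $MN=NM^{-1}$ forces $M$ to preserve the bracket. The Casimir functions of the Lotka--Volterra bracket are monomials $\prod_i f_i^{m_i}$ with $m$ in the kernel of $(c_{ij})$; I would identify these explicitly and show they are invariant (or permuted) under both $M$ and $N$.

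The heart of the argument is the construction of enough commuting integrals that are common to both branches. Here I would return to the Lax representation $\widehat{K}G=\widetilde{G}K$ mentioned after equation \eqref{dkdv}: the $(0,n)$-periodic monodromy of the discrete Schr\"odinger operator $K=-T^2+hT+\alpha$ yields a characteristic polynomial in the spectral parameter $\lambda$ whose coefficients are the desired conserved quantities. I would express these spectral invariants as symmetric functions of the $f_i$ and $\beta_i$, show they Poisson-commute with respect to the Lotka--Volterra bracket (the standard argument that traces of powers of the monodromy matrix are in involution for an $r$-matrix structure of this type), and confirm invariance under $N$ using \eqref{os} and under $M$ using the explicit formula \eqref{nontrivial_map} together with the $\tau$-invariance of \eqref{fixedpoint}.

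The main obstacle I anticipate is twofold. First, one must show the integrals are invariant under the \emph{nontrivial} branch $M$, not merely under $N$; the cleanest route is to argue that the spectral curve is shared by both solutions of the correspondence because they arise from the same Lax pair, so conservation under $M$ is automatic once conservation under the discrete evolution is established. Second, and more delicate, is verifying \emph{functional independence} of the $\lfloor n/2\rfloor$ integrals together with the Casimirs, and checking that the count matches half the dimension of a generic symplectic leaf, so that Liouville--Arnold applies in the multi-valued setting. Parity enters precisely here: for odd $n$ the kernel dimension of the nearest-neighbour circulant $(c_{ij})$ is exactly one (the all-ones vector, corresponding to the product Casimir $\prod_i f_i$), giving symplectic leaves of dimension $n-1$ and hence requiring $(n-1)/2$ commuting integrals, which is the number the monodromy supplies; for even $n$ the kernel jumps and the counting fails, explaining the restriction in the statement.
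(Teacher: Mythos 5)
Your overall architecture --- an invariant Lotka--Volterra bracket, integrals from the trace of the monodromy matrix, involutivity, and a Casimir count that singles out odd $n$ --- is the same as the paper's, but two load-bearing steps are wrong or missing. First, the bracket itself: the paper does not use the nearest-neighbour circulant $c_{i,i+1}=1$, $c_{i,i-1}=-1$, but the dense alternating matrix $A_{i,j}=(-1)^{j-i+1}$ for all $i<j$ (Bogoyavlenskij type). The two coincide for $n=3$ but differ from $n=5$ onward, and the specific choice matters because the proof that $M$ preserves the bracket and that the integrals commute both hinge on it. Second, and more seriously, your deduction that $M$ preserves the bracket is a non sequitur: from ``$N$ is (anti-)Poisson'' and $MN=NM^{-1}$ you get $M=NM^{-1}N^{-1}$, and conjugating by a Poisson involution shows only that $M$ is Poisson if and only if $M^{-1}$ is --- which is automatic and yields no information. (The paper even remarks that $N$ preserves \emph{every} Lotka--Volterra bracket, for every $n$, so this route cannot possibly detect the correct $A$ or the parity restriction.) The paper's actual argument is structural: it introduces $\phi(f)_i=f_i+g_{i+1}$ and $\psi(f)_i=f_{i+1}+g_i$, observes that the system \eqref{ddc} itself forces $\phi\circ M=\psi$ and $\phi\circ N=\psi$, and checks that $\phi$ and $\psi$ are both Poisson maps from $(\mathbb R^n,\PB_q)$ to a second (inhomogeneous) bracket $\PB_{\bf b}$; item (3) of Lemma~\ref{poissonmaps} then follows. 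That is the idea your proposal is missing.

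The involutivity step also cannot be waved through with ``the standard $r$-matrix argument'': no $r$-matrix structure for this monodromy matrix relative to the Lotka--Volterra bracket is exhibited or obviously available. The paper instead computes $\operatorname{tr}(\cL)$ explicitly via a rank-one decomposition of the Lax matrices, finds that the coefficients $I_0,\dots,I_r$ are polynomials in the variables $x_i=-(g_{i+1}+f_i)=-\phi(f)_i$ which coincide with the Veselov--Shabat integrals of the continuous dressing chain, and imports their involutivity with respect to $\PB_{\bf b}$ through the Poisson map $\phi$. Your instinct to use the monodromy and to argue isospectrality under both branches is correct, but the commutation of the integrals is the hard part and is left unproved in your sketch. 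Two smaller points: your first paragraph asserts the kernel of $(c_{ij})$ has dimension $n-\tfrac{n-1}{2}$, contradicting your (correct) final paragraph where it is $1$ for odd $n$; and functional independence of $I_0,\dots,I_r$, which you rightly flag, still needs an argument.
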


Before proving the theorem, we introduce the relevant Poisson structures and provide some of their basic properties.

\subsection{Lotka--Volterra Poisson structures}
Lotka--Volterra Poisson structures are homogeneous quadratic, and they are defined on $\mathbb R^n$ by the formulas%
\begin{gather}\label{LVPB}
\pb{x_i,x_j}_q=A_{i,j}x_ix_j,\qquad i,j\in\cI ,
\end{gather}
where $A$ is a constant skew-symmetric matrix.

The rank of this Poisson structure is equal, at a generic point, to the rank of the constant matrix $A$. Each null-vector of the matrix~$A$ is associated to a Casimir of the corresponding Poisson bracket. If $\textbf{v}=(v_1,v_2,\ldots,v_n)$ is such that $\textbf{v}A=0$, then the function $\prod\limits_{i=1}^nx_i^{v_i}$ is a Casimir of the Poisson bracket. Two linearly independent null-vectors correspond to two functionally independent Casimirs (for a proof see \cite[Example~8.14]{polpoisson}).

In what follows we consider the Lotka--Volterra structures \eqref{LVPB} where $A$ is the $n\times n$ skew-symmetric matrix with its upper triangular part defined by
\begin{gather}\label{eq:Bogo_mat}
 A_{i,j}=(-1)^{j-i+1} ,\qquad 1\leq i<j \leq n .
\end{gather}
The rank of the matrix $A$ is $n$ when $n$ is even and $n-1$ when $n$ is odd with the null-vector $\textbf{v}=(1,1,\ldots,1)$; a Casimir of the corresponding Poisson structure is the function $x_1x_2\cdots x_n$. With $H=x_1+x_2+\cdots+x_n$, the Hamiltonian vector field $\{\cdot, H\}_q$ defines a system of differential equations which, up to a simple change of variables, is isomorphic to the Bogoyavlenskij lattice \cite{bog2, bog3, damianou_evr_kass_van, pol2014}.

A simpler Poisson structure is the constant Poisson structure defined by the brackets
\begin{gather}\label{CPB}
\pb{x_i,x_j}_c=B_{i,j} ,
\end{gather}
where $B$ is a constant $n\times n$ skew-symmetric matrix. After some tedious but straightforward calculations (see \cite[Proposition 3]{PPP2017}), one proves that the Poisson structures \eqref{LVPB} and \eqref{CPB} are compatible if and only if
\begin{gather}\label{eq:expl_mat_B}
\begin{cases}
B_{i,j}=0 & \text{for all } |j-i|>1, \text{ when } n \text{ is even,}\\
B_{i,j}=0 & \text{for all } 1<|j-i|<n-1, \text{ when } n \text{ is odd.}
\end{cases}
\end{gather}

For $n$ odd, let $\PB_{\bf{b}}=\PB_q+\PB_c$ denote the sum of the brackets defined by (\ref{LVPB}), (\ref{eq:Bogo_mat}) and (\ref{CPB}), (\ref{eq:expl_mat_B}) where the subscript $\bf{b}$ is the vector ${\bf b}=(b_{1,2},b_{2,3},\ldots,b_{n-1,n}, b_{1,n})$, i.e., the non-zero elements of the matrix $B$. With $H=x_1+x_2+\cdots+x_n$, the Hamiltonian vector field $\{\cdot, H\}_{\bf{b}}$ defines a system of differential equations which can be transformed to the dressing chain~\eqref{pdc}. The Poisson structure $\PB_{\bf b}$ is of rank $n-1$ but with a more complicated Casimir than the product $x_1x_2\cdots x_n$ (see \cite{PPP2017, ForHon} for an explicit construction of this Casimir).

\subsection{Complete integrability}

We start by showing that the maps $M$ and $N$, defined in Section \ref{sc}, preserve the Poisson structure $\PB_{\bf b}=\PB_q+\PB_c$ with ${\bf b}=(-\beta_2,-\beta_3,\ldots,-\beta_n,\beta_1)$. To this end we define two additional maps $\phi,\psi\colon \mathbb R^n\rightarrow\mathbb R^n$,
\begin{align*}
\phi(f_1,f_2,\ldots,f_n)&=(f_1+g_2,f_2+g_3,\ldots,f_n+g_1) ,\\
\psi(f_1,f_2,\ldots,f_n)&=(f_2+g_1,f_3+g_2,\ldots,f_1+g_n) .
\end{align*}
\begin{Lemma}
\label{poissonmaps}
For $n$ odd, the maps $M$, $N$, $\phi$ and $\psi$ are Poisson maps as follows
\begin{enumerate}\itemsep=0pt
\item[$1)$] $\phi\colon (\mathbb R^n,\PB_q)\rightarrow (\mathbb R^n,\PB_{{\bf b}})$;
\item[$2)$] $\psi\colon (\mathbb R^n,\PB_q)\rightarrow (\mathbb R^n,\PB_{{\bf b}})$;
\item[$3)$] $M\colon (\mathbb R^n,\PB_q)\rightarrow (\mathbb R^n,\PB_q)$;
\item[$4)$] $N\colon (\mathbb R^n,\PB_q)\rightarrow (\mathbb R^n,\PB_q)$.
\end{enumerate}
\end{Lemma}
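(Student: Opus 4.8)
The plan is to reduce all four assertions to one structural identity together with a short monomial computation. Writing $g_i=\beta_i/f_i$, so that $\phi_i=f_i+g_{i+1}$ and $\psi_i=f_{i+1}+g_i$ with all indices read cyclically, a direct inspection of the equation $E_i$ shows that its left-hand side is $\phi_i(\tf)$ while its right-hand side is $\psi_i(f)$. Hence both branches of the correspondence collapse under $\phi$ to the single map $\psi$:
\[
\phi\circ M=\psi,\qquad \phi\circ N=\psi.
\]
These identities are the bridge: once $\phi$ and $\psi$ are known to be Poisson maps $(\mathbb R^n,\PB_q)\to(\mathbb R^n,\PB_{\bf b})$, assertion 3) (and, alternatively, 4)) follows formally.

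For assertion 4) I would argue directly, since $N$ is the Laurent-monomial map $f_i\mapsto\beta_i f_i^{-1}$. From the Leibniz rule one gets $\pb{f_i^{-1},f_j^{-1}}_q=A_{i,j}f_i^{-1}f_j^{-1}$ (equivalently, conjugating $A$ by the exponent matrix $-I$ leaves it unchanged), whence $\pb{N_i,N_j}_q=\beta_i\beta_j A_{i,j}f_i^{-1}f_j^{-1}=A_{i,j}N_iN_j$; this is exactly $N^{*}\pb{x_i,x_j}_q$, and it holds for every $n$. For assertion 3) I would use $\phi\circ M=\psi$: the differential $d\phi$ is the identity plus a cyclic superdiagonal matrix, so $\det d\phi$ is a nonconstant rational function and $\phi$ is a local diffeomorphism on an open dense set. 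There $\phi$ has a local inverse which, being inverse to a Poisson map, is Poisson $(\mathbb R^n,\PB_{\bf b})\to(\mathbb R^n,\PB_q)$; composing with $\psi$ realises $M$ locally as a product of Poisson maps, so $\pb{M_i,M_j}_q=A_{i,j}M_iM_j$ on that set, and by density this identity of rational functions holds everywhere. Thus 3) reduces entirely to 1) and 2).

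The substance therefore lies in assertions 1) and 2), which I would verify by direct expansion. Using $\pb{f_a,f_b}_q=A_{a,b}f_af_b$, $\pb{f_a,f_b^{-1}}_q=-A_{a,b}f_af_b^{-1}$ and $\pb{f_a^{-1},f_b^{-1}}_q=A_{a,b}f_a^{-1}f_b^{-1}$, one finds, for indices with $f_i,f_{i+1},f_j,f_{j+1}$ distinct,
\[
\pb{\phi_i,\phi_j}_q-A_{i,j}\phi_i\phi_j=-\beta_{j+1}\big(A_{i,j+1}+A_{i,j}\big)f_if_{j+1}^{-1}-\beta_{i+1}\big(A_{i+1,j}+A_{i,j}\big)f_{i+1}^{-1}f_j+\beta_{i+1}\beta_{j+1}\big(A_{i+1,j+1}-A_{i,j}\big)f_{i+1}^{-1}f_{j+1}^{-1}.
\]
Away from the cyclic boundary the alternating law $A_{a,b}=(-1)^{b-a+1}$ gives $A_{i,j+1}=-A_{i,j}$, $A_{i+1,j}=-A_{i,j}$ and $A_{i+1,j+1}=A_{i,j}$, so the entire right-hand side vanishes, matching $B_{i,j}=0$. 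A constant contribution can arise only when one of the Laurent monomials collapses, i.e.\ when $i+1\equiv j$ or $j+1\equiv i$ modulo $n$; by skew-symmetry it suffices to treat $i+1\equiv j$, where $\pb{f_{i+1}^{-1},f_j}_q=0$ but the product term $f_{i+1}^{-1}f_j=1$ survives inside $A_{i,j}\phi_i\phi_j$, leaving the constant $-A_{i,j}\beta_{i+1}$. For $j=i+1$ with $i<n$ this is $-\beta_{i+1}=B_{i,i+1}$, and at the wrap-around pair $(i,j)=(n,1)$ it is $-A_{n,1}\beta_1$, which equals $-\beta_1$ precisely because $A_{n,1}=-A_{1,n}=-(-1)^{n}=1$ for odd $n$; hence $B_{1,n}=\beta_1$. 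These are exactly the prescribed nonzero entries ${\bf b}=(-\beta_2,\ldots,-\beta_n,\beta_1)$ of $B$, and no others occur. The computation for $\psi$ is identical in structure.

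I expect the main obstacle to be the careful bookkeeping of the cyclic indices in 1) and 2). The sign law $A_{a,b}=(-1)^{b-a+1}$ is defined only for $1\le a<b\le n$, so each relation used in the generic cancellation must be re-examined whenever an index crosses between $n$ and $1$, and one must confirm that no spurious constants appear for non-adjacent wrapped pairs. It is exactly here that parity intervenes: the value $A_{n,1}=1$ needed to turn the corner term into $+\beta_1$, and the very admissibility of a corner entry in a compatible constant bracket (condition~\eqref{eq:expl_mat_B}), both require $n$ to be odd. Verifying that these boundary cases conspire to reproduce precisely the corner entry $B_{1,n}$ is the delicate part of the argument.
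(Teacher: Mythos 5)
Your proposal is correct and follows essentially the same route as the paper: verify items (1), (2) and (4) by direct computation with $g_i=\beta_i/f_i$, and deduce item (3) from (1) and (2) via the identities $\phi\circ M=\psi$ and $\phi\circ N=\psi$ read off from the system \eqref{ddc}. You merely fill in details the paper leaves as ``straightforward computations'' -- the general pair $(i,j)$ with its cyclic boundary cases where the parity of $n$ enters, and the local invertibility of $\phi$ needed to pass from (1), (2) to (3) -- all of which check out.
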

\begin{proof}The proof of items $(1)$, $(2)$ and $(4)$ follows from straightforward computations. For example, for any $i=1,2,\ldots, n$
\begin{align*}
\pb{f_i,f_{i+1}}_{\bf b}\circ \phi& =(f_i+g_{i+1})(f_{i+1}+g_{i+2})-
\beta_{i+1}=f_if_{i+1}+f_ig_{i+2}+g_{i+1}g_{i+2}\\
& =\{f_i+g_{i+1},f_{i+1}+g_{i+2}\}_q ,\\
\pb{f_i,f_{i+1}}_{\bf b}\circ \psi& =(f_{i+1}+g_i)(f_{i+2}+g_{i+1})-\beta_{i+1}=f_{i+1}f_{i+2}+g_if_{i+2}+g_ig_{i+1}\\
&=\pb{f_{i+1}+g_i,f_{i+2}+g_{i+1}}_q ,
\end{align*}
where the indices are considered modulo $n$ and in the set $\cI$. Item (3) follows from items~(1) and~(2) combined with
\begin{gather*}
\phi\circ M=\psi \qquad \text{and} \qquad \phi\circ N=\psi,
\end{gather*}
which are a consequence of \eqref{ddc}.
\end{proof}

\begin{Remark} The involution $N$ preserves any Poisson structure of Lotka--Volterra form. It follows that item (4), of the previous proposition, is (trivially) true for all $n\in\mathbb N$. However, this is not the case for items $(1)-(3)$ as for even $n$ the equations in the previous proof do not hold for $i=n$. Note, for even $n$ the bracket $\PB_{{\bf b}}$ is not Poisson, see condition~\eqref{eq:expl_mat_B}.
\end{Remark}

To derive sufficiently many independent invariants for the map \eqref{nontrivial_map}, we employ a matrix version of the Lax representation. Let $\widetilde{\Phi_i}=\cV_i \Phi_i$, $\widehat{\Phi_i}=\Phi_{i+1}=\cG_i \Phi_i$, where
\begin{gather*} 
\cV_i=
\begin{pmatrix}
 0 & 1 \\
 -\lambda & \widetilde{g_i}-f_i
\end{pmatrix},
\qquad
\cG_i=
\begin{pmatrix}
 -g_i & 1 \\
 -\lambda & -f_i
\end{pmatrix}
\end{gather*}
and $\Phi=\big(\phi,\widetilde{\phi}\big)^T$. The compatibility condition is now written (modulo the periodic condition $n+j=j$) as $\cV_{i+1} \cG_i=\widetilde{\cG}_i \cV_i$. Then one has
\begin{gather*}
\widetilde{\cG}_n \widetilde{\cG}_{n-1} \cdots \widetilde{\cG}_1=\cV_1 \cG_n \cG_{n-1} \cdots \cG_1 \cV_1^{-1} ,
\end{gather*}
which implies that the (monodromy) matrix $\cL=\cG_n \cG_{n-1} \cdots \cG_1$ is isospectral. The eigenvalues of $\cL $ (and therefore its trace) are invariants of our map~\eqref{nontrivial_map}.

Using a decomposition of the Lax matrix $\cG_i$ similar to the one used in \cite{PPP2017, ves_shab} (explained in a more general manner in \cite{tran}), we are able to calculate the trace of $\cL$ and provide a succinct formula for the invariants.  Futhermore, using the results of \cite{ves_shab} we show that these invariants are in involution with respect to the Poisson bracket \eqref{LVPB} with matrix \eqref{eq:Bogo_mat} and therefore complete the proof of Theorem~\ref{thr:ddc_int}.

\begin{Proposition}Let $r=\frac{n-1}{2}\in\mathbb{N}$. There exist functions $I_0, I_1, \ldots, I_r$, which are polynomials of the variables
\begin{gather*}
x_1=-(g_2+f_1),\quad x_2=-(g_3+f_2),\quad \ldots,\quad x_n=-(g_1+f_n)
\end{gather*}
of degree $\deg(I_i)=2i+1$, such that $\operatorname{tr}(\cL)=\sum\limits_{i=0}^r I_i\lambda^i .$
With
\begin{gather*}
D_i=\frac{\partial^2}{\partial x_i \partial x_{i+1}},\qquad D=\sum_{i\in\cI} D_i ,
\end{gather*}
where $x_{n+1}=x_1$ is understood, we have
\begin{gather*}
I_0=\prod_{i\in\cI} (1-\beta_i D_i) \prod_{j\in\cI} x_j , \qquad I_j=\frac{(-1)^j}{j!}D^j I_{0} .
\end{gather*}
Furthermore, these functions are in involution with respect to the Poisson bracket $\PB_q$.
\end{Proposition}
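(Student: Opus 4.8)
The plan is to work with the additive decomposition $\cG_i=A_i+\lambda B$, where $A_i=\begin{pmatrix}-g_i&1\\0&-f_i\end{pmatrix}$ is upper triangular and $B=\begin{pmatrix}0&0\\-1&0\end{pmatrix}$ is nilpotent of rank one, and to expand $\operatorname{tr}(\cL)=\operatorname{tr}\big(\prod_i(A_i+\lambda B)\big)$. The coefficient of $\lambda^k$ is the sum over $k$-subsets $S\subset\cI$ of the sites carrying a~$B$. Because $B^2=0$ and $BMB=-(M)_{12}B$ for every $2\times2$ matrix $M$, a term survives only if no two sites of~$S$ are cyclically adjacent; the largest such $S$ has $r=(n-1)/2$ elements, so $\deg_\lambda\operatorname{tr}(\cL)=r$. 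Between two consecutive $B$'s the surviving factor is the $(1,2)$-entry of the intervening block of $A$'s, and a direct computation gives $(A_{i+1}A_i)_{12}=-(g_{i+1}+f_i)=x_i$, with longer blocks obeying the three-term continuant recurrence whose off-diagonal weights are the products $\beta_i$ (the relevant $fg$). This already shows each $I_i$ is a polynomial in the $x_j$, the degree dropping by two as $i$ increases, from $\deg I_0=n$ down to $\deg I_r=1$.

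The constant term is $I_0=\operatorname{tr}(A_nA_{n-1}\cdots A_1)=(-1)^n\big(\prod_i g_i+\prod_i f_i\big)$, the two summands being the products of the two diagonals. Substituting $x_i=-(g_{i+1}+f_i)$ together with the $\beta_i$ (the products $fg$ at the shared sites) and regrouping turns this into the cyclic continuant in the $x_j$: the empty matching contributes $\prod_j x_j$, and each family of non-overlapping dominoes $\{x_i,x_{i+1}\}$ contributes $(-\beta_i)$ times the remaining $x_j$'s, which is exactly the output of the commuting operators $\prod_{i}(1-\beta_iD_i)$ applied to $\prod_j x_j$ (all higher terms vanish, since no variable may be differentiated twice). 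I would verify this as a purely algebraic identity in $x_i$ and the~$\beta_i$, not using the dynamics.

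The heart of the matter is the operator identity $\partial_\lambda\operatorname{tr}(\cL)=-D\operatorname{tr}(\cL)$, equivalently the recursion $(k+1)I_{k+1}=-DI_k$, which yields $I_k=\tfrac{(-1)^k}{k!}D^kI_0$ immediately. I would prove it by a weight-preserving bijection on the expansion above: differentiating in $\lambda$ marks one~$B$, and inserting a~$B$ at a site~$s$ is precisely the operation that deletes the two bonds $x_{s-1},x_s$ meeting at~$s$ --- which is exactly the effect of $D_{s-1}=\partial_{x_{s-1}}\partial_{x_s}$ on a configuration in which those bonds are unmatched. Summing over~$s$ (respectively over~$i$) identifies $\partial_\lambda$ with $-D$; the overall sign comes from the factor $-\lambda$ carried by each~$B$, and the combinatorial factor $k+1$ counts which of the $k+1$ markers of a configuration was the newly created one. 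The step I expect to be the main obstacle is precisely this bookkeeping: one must check that removing an unmatched bond pair restructures the surrounding $A$-blocks, and reweights their continuants, so that the two sides agree term by term and not merely after substitution.

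Finally, for involutivity I would note that $x_i=-\phi(f)_i$ for the Poisson map $\phi$ of Lemma~\ref{poissonmaps}, so that, written in the variables $x_i$, the $I_i$ are exactly the spectral invariants of the monodromy of the (continuous) dressing chain. Their pairwise commutation with respect to the Lotka--Volterra bracket \eqref{LVPB}--\eqref{eq:Bogo_mat} is the complete integrability proved by Veselov and Shabat \cite{ves_shab}; since the $I_i$ are the coefficients of the isospectral trace $\operatorname{tr}(\cL)$, this gives $\pb{I_i,I_j}_q=0$ for all $i,j$, completing the proof.
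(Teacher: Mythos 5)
Your overall strategy --- an additive decomposition of $\cG_i$ into a $\lambda$-free part plus a nilpotent rank-one part, expansion of $\operatorname{tr}(\cL)$ over independent sets of $B$-sites on the cycle, identification of the arcs between markers with continuants in the $x_j$, and pulling Veselov--Shabat involutivity back through the Poisson map $\phi$ --- is sound, and your final step is exactly the paper's involutivity argument. But the central step of your plan, the identity $\partial_\lambda\operatorname{tr}(\cL)=-D\,\operatorname{tr}(\cL)$, is only sketched, and you yourself flag the bookkeeping as the main obstacle; as written this is a genuine gap. The difficulty is created by your choice of splitting: with $A_i=\cG_i|_{\lambda=0}$ upper triangular, the blocks between consecutive $B$'s carry full continuant polynomials, so matching ``insert a marked $B$ at site $s$'' with ``apply $D_{s-1}$'' requires both the splitting formula for continuants at an unmatched pair of bonds and a re-expansion of the $(1-\beta_iD_i)$ factors on the affected arc --- doable, but nontrivial, and it is precisely the part you have not done.

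The paper sidesteps this entirely by choosing the other natural splitting: using $f_ig_i=\beta_i$ it writes $\cG_i=\cA^i_i-(\lambda+\beta_i)\cK$, where $\cK=-B$ and $\cA^i_j=\left(\begin{smallmatrix}-1\\ f_i\end{smallmatrix}\right)\!\left(\begin{smallmatrix}g_j & -1\end{smallmatrix}\right)$ is rank one. Then $\cA^i_j\cA^k_l=-(g_j+f_k)\cA^i_l$ and $\cA^i_j\cK\cA^k_l=\cA^i_l$, so every term of the expansion telescopes to a scalar multiple of $\cA^n_1$ and the entire generating function $\operatorname{tr}(\cL)=\prod_i\bigl(1-(\lambda+\beta_i)D_i\bigr)\prod_j x_j$ drops out in one computation; the formula for $I_0$ (set $\lambda=0$) and the recursion $I_j=\frac{(-1)^j}{j!}D^jI_0$ (expand in $\lambda$, count the $j!$ orderings of the chosen $D$'s) are then immediate, with no separate combinatorial identity left to prove. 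If you adopt this decomposition, your outline closes. Two smaller points: the degrees you state ($\deg I_0=n$ decreasing to $\deg I_r=1$) are the ones that actually follow from $I_j=\frac{(-1)^j}{j!}D^jI_0$; and in the involutivity step you should be precise that Veselov--Shabat prove involution with respect to $\PB_{\bf b}$ in the $x$-variables, so it is the Poisson property of $\phi$ (item (1) of Lemma~\ref{poissonmaps}), not isospectrality, that transports this to $\pb{I_i,I_j}_q=0$ in the $f$-variables.
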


\begin{proof}We decompose the matrix $\cG_i$ as follows
\begin{gather*}
\cG_i=\cA^i_i-(\lambda+\beta_i)\cK ,
\end{gather*}
where
\begin{gather*}
\cK=
\begin{pmatrix}
0 & 0 \\
1 & 0
\end{pmatrix}, \qquad
\cA^i_j=\begin{pmatrix}
-1 \\
f_i
\end{pmatrix}
\cdot \begin{pmatrix}
g_j & -1
\end{pmatrix}.
\end{gather*}%
This structure simplifies the computations. For example we can immediately verify the following properties
\begin{gather*}
\cK^2=\textbf{0},\qquad \cA^i_j \cA^k_l=-(g_j+f_k) \cA^i_l,\qquad \cA^i_j \cK \cA^k_l=\cA^i_l,\qquad \cK \cA^i_j \cK=\cK.
\end{gather*}
The monodromy matrix is, with $y_i=-(\lambda+\beta_i)$, in the form
\begin{align*}
\cL&=\big(\cA^n_n+y_n\cK\big)\big(\cA^{n-1}_{n-1}+y_{n-1}\cK\big)\cdots\big(\cA^1_1+y_1\cK\big)\\
&=\cA^n_n\cA^{n-1}_{n-1}\cdots \cA^1_1 + \sum_{i\in\cI} y_i\cA^n_n\cA^{n-1}_{n-1}\cdots \cA^{i+1}_{i+1}\cK \cA^{i-1}_{i-1}\cdots \cA^1_1 + \sum_{i,j\in\cI} \cdots\\
&=x_nx_{n-1}\cdots x_2\cA^n_1 + \sum_{i\in\cI} y_i x_nx_{n-1}\cdots x_{i+2}x_{i-1}\cdots x_1\cA^n_1 + \sum_{i,j\in\cI} \cdots,
\end{align*}
and its trace
\begin{align*}
\operatorname{tr}(\cL)&=\prod_{i\in\cI} x_i + \sum_{i\in\cI}\frac{y_i}{x_{i+1}x_{i}}\prod_{j\in\cI} x_j + \sum_{i>j+1\in\cI}\frac{y_i}{x_{i+1}x_{i}} \frac{y_j}{x_{j+1}x_{j}} \prod_{k\in\cI} x_j + \cdots \\
&=\prod_{i\in\cI} \left(1+y_iD_i\right) \prod_{j\in\cI}x_i=\sum_{i=0}^k I_i\lambda^i.
\end{align*}
Substituting $\lambda=0$ yields the expression for $I_0$, i.e.,
\begin{gather}\label{I0}
I_0=(1-\beta_1D_1)(1-\beta_2D_2)\cdots(1-\beta_nD_n)\prod_{i\in\cI} x_i ,
\end{gather}
while
\begin{align*}
I_1&=\sum_{j\in\cI}(1-\beta_1D_1)(1-\beta_2D_2)\cdots(-D_j)\cdots(1-\beta_nD_n)\prod_{i\in\cI} x_i ,\\
I_2&=\sum_{j<k\in\cI}(1-\beta_1D_1)(1-\beta_2D_2)\cdots(-D_j)\cdots(-D_k)\cdots(1-\beta_nD_n)\prod_{i\in\cI} x_i ,\\
&\vdots
\end{align*}
Applying $D$ to $I_0$ gives a sum of similar products where the $j$-th term $(1-\beta_jD_j)$ is replaced by $D_j$, which is equal to $-I_1$. Similarly, applying this operator $k$ times yields $(-1)^k I_k$ $k!$ times, namely once for each permutation of $k$ indices. This proves the first part of the proposition.

To prove the second part of the proposition, we first note that the invariants $I_i$ (as functions of $x_i$), as obtained from the trace of $\cL$, coincide with the invariants that Veselov and Shabat provided for the continuous dressing chain~\cite{ves_shab}. In that paper, using the Lenard--Magri scheme, they proved that the invariants $I_i$ are in involution with respect to the Poisson bracket $\PB_{\bf b}=\PB_q+\PB_c$. Therefore, in order to prove that the invariants $I_i$ (considered as functions of $f_i$) are in involution with respect to $\PB_q$, it suffices to show that the map
\begin{gather*}
\mathbb R^n \rightarrow \mathbb R^n, \qquad (f_1, f_2, \ldots, f_n)\mapsto (x_1, x_2, \ldots, x_n)
\end{gather*}
is a Poisson map between $(\mathbb R^n,\PB_q)$ and $(\mathbb R^n,\PB_{\bf b})$. This is precisely item $(1)$ of Lemma~\ref{poissonmaps}.
\end{proof}

\begin{Remark}The expression for $I_0$ (\ref{I0}) in terms of $f_i$, $g_i$, and $\beta_i$ is quite cumbersome, e.g., for $n=3$,
\begin{gather*}
I_0 =-f_{{1}}f_{{2}}f_{{3}}-f_{{1}}f_{{2}}g_{{1}}-f_{{1}}f_{{3}}g_{{3}}-f_{{1}}g_{{1}}g_{{3}}-f_{{2}}f_{{3}}g_{{2}}-f_{{2}}g_{{1}}g_{{2}}-f_{{3}}g_{{2}}g_{{3}}\\
\hphantom{I_0 =}{} -g_{{1}}g_{{2}}g_{{3}}+\beta_{{1}}f_{{2}}+\beta_{{3}}f_{{1}}+\beta_{{1}}g_{
{3}}+\beta_{{2}}f_{{3}}+\beta_{{2}}g_{{1}}+\beta_{{3}}g_{{2}} .
\end{gather*}
However, when imposing the relation $f_ig_i=\beta_i$, the expression simplifies drastically, and we have $I_0=-(f_1f_2f_3+g_1g_2g_3)$. The fact that a similar expression can be obtained for any $n$ can be seen from
\begin{gather*}
\cL\mid_{\lambda=0}=\prod_{i=1}^n\begin{pmatrix}-g_i&1\\0&-f_i\end{pmatrix}=(-1)^n
\begin{pmatrix}\displaystyle \prod_{i=1}^n g_i&\displaystyle -\sum_{i=1}^n\prod_{j=1}^{i-1}f_i\prod_{k=i+1}^{n}g_i\\0 &\displaystyle\prod_{i=1}^nf_i\end{pmatrix}.
\end{gather*}
\end{Remark}

\begin{Remark} For $n$ even the map $M$ is anti-volume preserving and for odd~$n$ it is volume preserving. The map~$N$ is measure preserving when~$n$ is even and anti-measure preserving when~$n$ is odd. The density of the measure is $\prod\limits_{i=1}^n\frac{1}{f_i}$.
\end{Remark}

\subsection*{Acknowledgements}
This work was supported by the Australian Research Council, by the China Strategy Implementation Grant Program of La Trobe University, by the NSFC (No.~11601312) and by the Shanghai Young Eastern Scholar program (2016-2019).

\pdfbookmark[1]{References}{ref}
\LastPageEnding

\end{document}